\documentclass[journal]{IEEEtran}
\usepackage{times,amsmath,color,amssymb,graphicx,epsfig,cite,psfrag,subfigure,algorithm,balance}
\usepackage{amsfonts,pifont,enumerate,cases}
\usepackage{mathrsfs} 
\usepackage[table]{xcolor} 
\usepackage{verbatim} 
\usepackage{bm}

\newtheorem{theorem}{\underline{Theorem}}

\newtheorem{remark}{\underline{Remark}}
\newtheorem{proof}{Proof}

\def\E{\mathop\mathrm{E}}

\newcounter{MYtempeqncnt}
\begin{document}
\title{Time Varying Channel Tracking for Multi-UAV Wideband Communications with Beam Squint}
\author{Jianwei Zhao, Qi Dong, Yanjie Zhao, Bolei Wang, and Feifei Gao
	\thanks{J. Zhao, B. Wang, and F. Gao are with  Tsinghua National Laboratory for Information Science and Technology (TNList) Beijing 100084, P. R. China (e-mail: zhaojw15@mails.tsinghua.edu.cn, boleiwang@ieee.org, feifeigao@ieee.org). J. Zhao is also with
		High-Tech Institute of Xi'an, Xi'an, Shaanxi 710025, China.
Q. Dong, and Y. Zhao are with China Academy of Electronics and Information Technology, Beijing 100041, China (e-mail: dongqiouc@126.com, zhaoyj\_dky@163.com).
}
}
\maketitle
\thispagestyle{empty}
\begin{abstract}
Unmanned aerial vehicle~(UAV) has become an appealing solution for a wide range of commercial and civilian applications because of its high mobility and flexible deployment. Due to the continuous UAV navigation, the channel between UAV and base station~(BS) is subject to the Doppler effect. Meanwhile, when the BS is equipped with massive number of antennas, the non-negligible propagation delay across the array aperture would cause beam squint effect. In this paper, we first investigate the channel of UAV communications under both Doppler shift effect and beam squint effect. Then, we design a gridless compressed sensing~(GCS) based channel tracking method, where the high dimension uplink channel can be derived by estimating a few physical parameters such as the direction of arrival (DOA), Doppler shift, and the complex gain information. Besides, with the Doppler shift reciprocity and angular reciprocity, the downlink channel can be derived by only one pilot symbol, which greatly decreases the downlink channel training overhead.  Various simulation results are provided to verify the effectiveness of the proposed methods.
\end{abstract}

\begin{IEEEkeywords}
UAV, massive MIMO, beam squint, time varying channel, and channel tracking.
\end{IEEEkeywords}
\section{Introduction}
Unmanned aerial vehicle~(UAV) has attracted ever increasing attention from both the industry and the academia due to its high mobility and flexible deployment. UAVs have been widely exploited in many applications such as the transportation of good, border surveillance, search and rescue missions as well as disaster response, etc~\cite{uav1,fandian}. Various UAV applications put forward exceptionally stringent communication requirements along the lines of available data rate, connection reliability, and latency,  which promotes UAV communications with massive array antennas under millimeter-wave~(mmWave) band~(30GHz-300GHz) to enhance the system performance. Different from the traditional low frequency bands~($<$ 6GHz), the mmWave has large available frequency resources that could be directly transmitted into the system bandwidth and realize broadband communications. Meanwhile, large antenna array is capable of  providing enormous spatial gain, which can be utilized to overcome the large path loss of mmWave band~\cite{mmwave2,xieCR,fandian1}.

Different from the conventional cellular communications, the majority of the UAV channel power would be contained within the line of sight~(LoS) path, which motivates a lot of angle domain signal processing studies. The authors in~\cite{zhangshun} formulated the dynamic massive MIMO channel as one sparse signal model and developed an expectation maximization (EM) based sparse Bayesian learning (SBL) framework to learn the model parameters of the sparse channel. An angle division multiple access (ADMA) based channel tracking method was proposed in~\cite{zhao2} for massive MIMO systems, where tracking the channel  is simplified to tracking the direction of arrival (DOAs) of the incident signals. Meanwhile, the uplink cooperative NOMA  was investigated in~\cite{lin} for cellular-connected UAV, which exploits the existing backhaul links among base stations to improve the throughput gains. The authors in~\cite{zhangshun2} proposed interference alignment and soft-space-reuse based cooperative transmission for multi-cell massive MIMO networks. The authors in~\cite{zhao} proposed an energy-efficient UAV communication strategy via optimizing the UAV's trajectory.

However, the channel of UAV communications with massive array antenna exhibits several unique features  compared to the conventional MIMO system, which hardens the procedure of channel tracking: (i) a practical channel of UAV communications would encounter Doppler shift due to the continuous UAV navigation; (ii) as with massive MIMO configuration, there would be a non-negligible propagation delay across the array aperture for the same data symbol, causing beam squint effect in frequency domain. Recently, there do exist some works~\cite{wang} considering the static massive MIMO with beam squint. However, channel tracking  under both Doppler shift effect and beam squint effect has not been investigated, for UAV communication systems to the best of the authors' knowledge.

In this paper, we first model the UAV communications under both Doppler shift effect and beam squint effect. Then, we present an efficient gridless compressed sensing~(GCS) based channel tracking method, where tracking the spatial channel is converted to tracking the DOA of the incident signal, Doppler shift, and complex gain respectively. Additionally, the downlink channel can be derived by only one pilot symbol with the Doppler shift and angular reciprocity. Various simulation results are provided to verify the effectiveness of the proposed studies.
\section{System and Channel Model}\label{sec:model}
We consider multiple UAV communications with  mmWave massive MIMO, where the ground base station~(BS) is equipped with the $M^B\times 1$ uniform linear array~(ULA), and $K$ UAVs are separately equipped with $M^U\times 1$ ULA, as shown in Fig.~\ref{fig:ULA}. Due to the scarce scatters in the sky, the channel is naturally sparse and there are only a few incident paths on both UAV and BS side. The large path loss of mmWave band further strengthens the channel sparsity, such that the line of sight~(LOS) path is dominant, while the other non-line of sight~(NLOS) paths can be ignored~\cite{LOS1,LOS2}. Meanwhile, due to the continuing UAV movements, the emitted wave goes through the Doppler effect.

\begin{figure}[t]
	\centering
	\includegraphics[width=85mm]{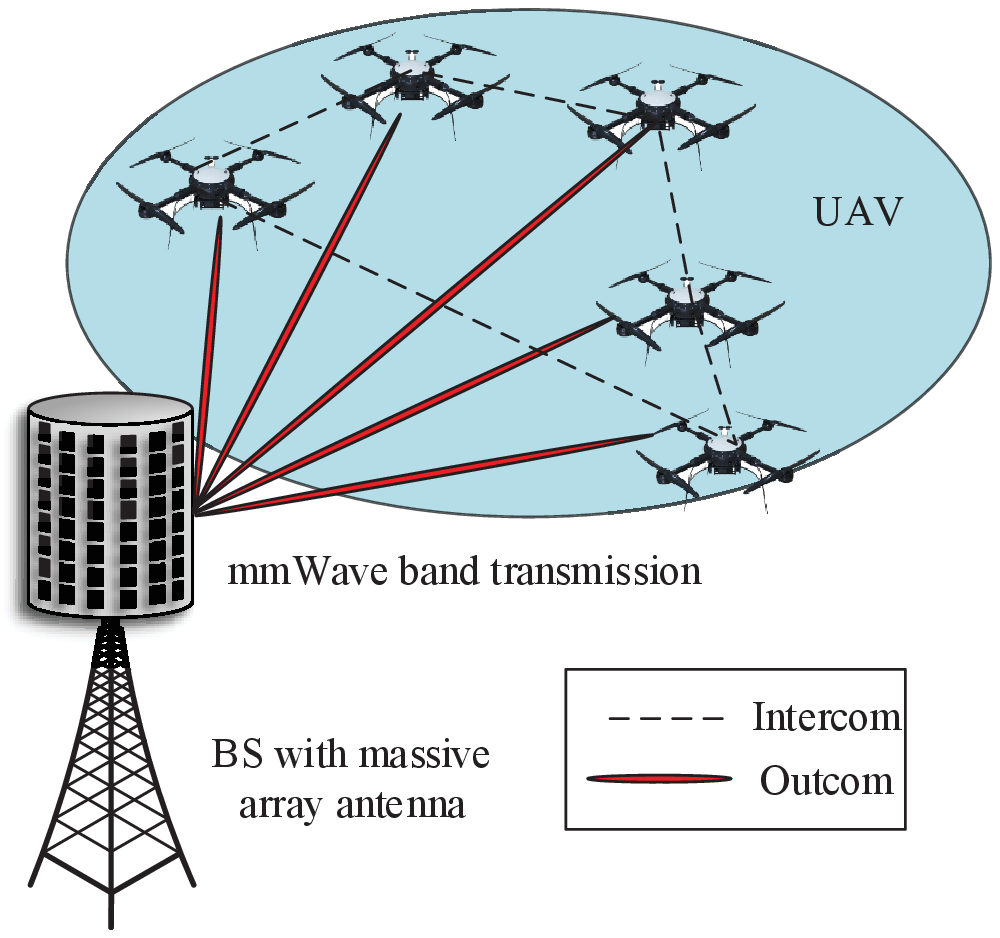}
	\caption{System Model.}
	\label{fig:ULA}
\end{figure}

As is shown in Fig.~\ref{fig:ULA1}, when the $(n+1)$-th antenna of UAV $k$ transmits the signal $s_k(t)$, the baseband signal received by the $(m+1)$-th antenna of BS from the $(n+1)$-th antenna of UAV $k$ can be denoted as~\cite{liye1,liye2}
\begin{align}\label{equ:baseband}
y_{knm}(t)=&s_k\left(t-m\frac{d\sin \theta_k^B}{c}-n\frac{d\sin \theta_k^U}{c}\right)\alpha_k \notag\\
&e^{-j2\pi f_{kd} t} e^{-j2\pi m\frac{d\sin\theta_k^B f_c }{c}}e^{-j2\pi n\frac{d\sin \theta_k^U f_c}{c}},
\end{align}
where $\alpha_k$ is the channel gain, $f_{kd}$ is the Doppler shift, $d$ is the distance between two adjacent antennas, $c$ is the light speed, $\theta_k^B$ and $\theta_k^U$ are the DOA at BS and the direction of departure~(DOD) of UAV respectively, and $f_c$ is the carrier frequency.

For the traditional narrow band MIMO systems, the antenna numbers $M^B$ and $M^U$ are finite, and meanwhile the symbol duration $T_s$ is relatively large. Hence, the following inequality always holds that
\begin{align}
m\frac{d\sin \theta_k^B}{c}+n\frac{d\sin \theta_k^U}{c} \ll T_s
\end{align}
and Equ.~\eqref{equ:baseband} reduces to
\begin{align}\label{equ:baseband1}
\!\!y_{knm}(t)\!\approx\!\!s\left(t\right) \alpha_k e^{-j2\pi f_{kd} t}\! e^{-j2\pi m\frac{d\sin \theta_k^B f_c }{c}}\!\!e^{-j2\pi n\frac{d\sin \theta_k^U f_c}{c}}.
\end{align}
In this case, the effective uplink channel between the $(n+1)$-th antenna of UAV $k$ and the $(m+1)$-th antenna of the ground BS can be expressed
\begin{align}\label{equ:ST1}
h_{k,n,m}=\alpha_k e^{-j2\pi f_{kd} t}e^{-j2\pi m\frac{d\sin \theta_k^B f_c }{c}}e^{-j2\pi n\frac{d\sin \theta_k^U f_c}{c}},
\end{align}
and the corresponding channel matrix can be derived as
\begin{align}\label{equ:nb}
\mathbf{H}_k=\alpha_{k}e^{-j2\pi f_{kd} t} \mathbf a \left(\theta_k^B\right)\mathbf {a}^T \left(\theta_k^U\right),
\end{align}
where $\mathbf a \left( \theta_k^B\right)$ is the $M^B\times 1$ steering vector at BS side with $\left[\mathbf a\left(\theta_k^B\right)\right]_m=e^{-j\frac{2\pi md\sin\theta^B}{\lambda_c}}$, while $\mathbf a \left( \theta_k^U\right)$ is the $M^U\times 1$ steering vector at UAV side with $\left[\mathbf a\left(\theta_k^U\right)\right]_n=e^{-j\frac{2\pi nd\sin\theta^U}{\lambda_c}}$.

\begin{figure}[t]
	\centering
	\includegraphics[width=85mm]{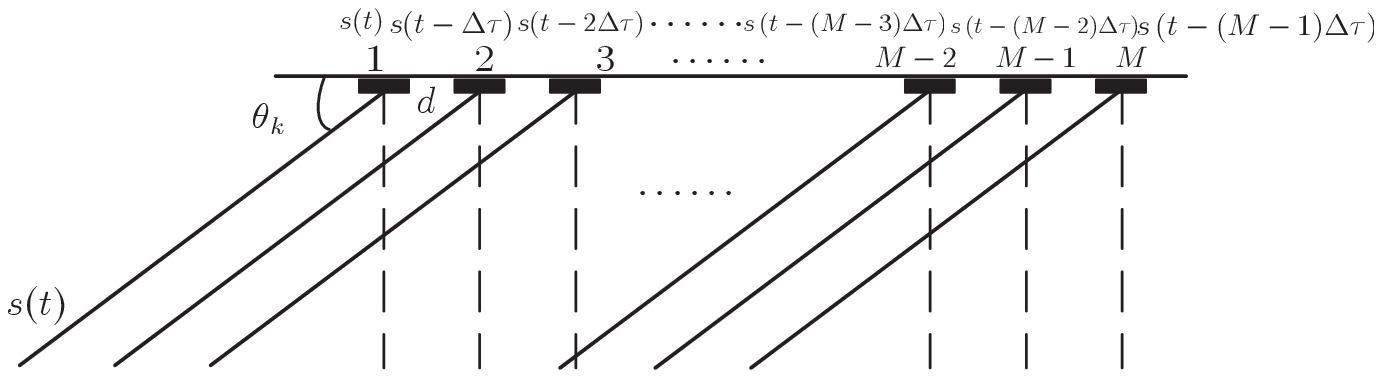}
	\caption{System Model.}
	\label{fig:ULA1}
\end{figure}

However, under massive MIMO configuration and large bandwidth of mmWave, the time delay of the signals across the large antenna array cannot be ignored. Hence, there could be 
\begin{align}
m\frac{d\sin \theta_k^B}{c}+n\frac{d\sin \theta_k^U}{c}> T_s,
\end{align}
and the approximation in Equ.~\eqref{equ:baseband1} does not hold. In this case,
different antenna would see unsynchronized $s_k(t)$ and the conventional MIMO model \eqref{equ:nb} is not valid anymore. The corresponding phenomenon can be named as \emph{beam squint} effect~\cite{wang}.


Under both the Doppler shift effect and beam squint effect, the uplink channel between the $(m+1)$-th antenna of the ground BS and the $(n+1)$-th antenna of UAV should be modeled as~\cite{liye1,liye2}
\begin{align}\label{equ:ST2}
h_{k,n,m}=&\alpha_k e^{-j2\pi f_{kd} t}e^{-j2\pi m\frac{d\sin \theta_k^B f_c }{c}}e^{-j2\pi n\frac{d\sin \theta_k^U f_c}{c}} \notag \\
&\delta \left(\tau-m\frac{d\sin \theta_k^B}{c}-n\frac{d\sin \theta_k^U}{c}\right),
\end{align}
while the corresponding frequency response can be derived as
\begin{align}\label{equ:SP2}
h_{k,n,m}(f)=&\int^{+\infty}_{-\infty}h_{k,l,m}e^{-j2\pi f\tau}d\tau \notag\\
=&\alpha_k e^{-j2\pi f_{kd} t}e^{-j2\pi m\frac{d\sin \theta_k^B f_c }{c}}e^{-j2\pi n\frac{d\sin \theta_k^U f_c}{c}} \notag \\
& e^{-j2\pi f m\frac{d\sin \theta_k^B }{c}}e^{-j2\pi f n\frac{d\sin \theta_k^U }{c}}.
\end{align}

It can be readily derived that the continuous time-frequency MIMO channel is
\begin{align}\label{equ:SP3}
\mathbf{H}_k(t,f)=\alpha_k e^{-j2\pi f_{kd} t} \mathbf a \left(\theta_k^B,f\right){\mathbf {a}}^T \left(\theta_k^U,f\right),
\end{align}
where $\mathbf a \left( \theta_k^B,f\right)$ is the $M^B\times 1$ spatial steering vector at BS side with
\begin{align}
\left[\mathbf a\left(\theta_k^B,f\right)\right]_m=e^{-j\frac{2\pi md\sin\theta_k^B}{\lambda_c}\left(1+\frac{f}{f_c}\right)},
\end{align}
while $\mathbf a \left( \theta_k^U,f\right)$ is the $M^U\times 1$ spatial steering vector at UAV side with
\begin{align}
\left[\mathbf {a}\left(\theta_k^U,f\right)\right]_n=e^{-j\frac{2\pi nd\sin\theta_k^U}{\lambda_c}\left(1+\frac{f}{f_c}\right)}.
\end{align}
\begin{remark}
To the best of the authors knowledge, this is the first work that presents  the channel modeling of massive MIMO system under both the Doppler shift effect and beam squint effect for UAV communications.
\end{remark}

For ease of illustration, we assume that each UAV is equipped with $M^U=1$ antenna in the rest of this paper. Then the continuous-time channel could be simplified as
\begin{align}\label{equ:ss}
\mathbf{H}_k(t,f)=\alpha_k e^{-j2\pi f_{d} t} \mathbf a \left(\theta_k,f\right),
\end{align}
where $\theta$ is DOA of the incident signals at BS side.

The discrete-time channel at block $l$ could be derived as
\begin{align}\label{equ:SP4}
\mathbf{h}_k(l,f)=\alpha e^{-j2\pi f_{kd} lN_bT_s} \mathbf a \left( \theta_k,f\right),
\end{align}
where $N_b$ is the number of symbols in each block.

Since the UAV speed and physical location would change much slower than the channel variation, the UAV movement related parameters such as $f_{kd}$ and $\theta_k$ can be viewed as unchanged within tens of blocks. 
Therefore, we can stack channels of $L$ blocks into an $L M\times 1$ vector $\mathbf h_k(f)$ and obtain
\begin{align}\label{equ:SP6}
\mathbf h_k(f)&=\alpha_k\textup{vec}\left[\mathbf a\left(\theta_k,f\right)\mathbf b^H\left(f_{kd}\right)\right]\notag \\ &=\alpha_k\mathbf p\left(f_{kd},\theta_k,f\right),
\end{align}
where $\mathbf b\left(f_{kd}\right)=\left[1,e^{-j2\pi f_{kd} N_bT_s},\dots, e^{-j2\pi f_{kd} (L-1)N_bT_s}\right]$ can be deemed as the Doppler steering vector.

Interestingly, even for LoS scenario, the large array would still lead to the inter symbol interference~(ISI) due to the propagation delays of the symbols across the large antenna array, which is a significantly different phenomenon from the conventional case.

Let us apply orthogonal frequency division multiplexing~(OFDM) to remove ISI. Denote $\eta=\frac{W}{N}$ as the carrier interval, where $W$ is the system bandwidth and $N$ is the number of the carriers. According to Equ.~\eqref{equ:SP6}, the channel at the $(p+1)$-th carrier can be expressed as
\begin{align}\label{equ:hk}
\mathbf h_k((p+1)\eta)=\alpha\mathbf p\left(f_{kd},\theta_k,p\eta\right).
\end{align}

Since the channel model is significant for the subsequent channel tracking, precoding, and transmission, we here provide an explicit classification rule to determine the channel model type for UAV communications: nonselective, time selective, antenna selective or doubly selective.\footnote{Doubly selectivity here means antenna selectivity plus time selectivity.} According to \eqref{equ:SP3}, the antenna selective effect would not exist, when the total time delay of the signal across the massive array meets $\frac{(M-1)d\sin \theta_k}{c}\ll T_s$, namely, $\frac{(M-1)d\sin \theta_k}{c T_s}\ll 1$. Therefore, the classification rule can be readily derived as
\begin{align}\label{equ:SP}
\max\frac{(M-1)d\sin \theta_k}{c T_s}=\frac{(M-1)d}{c T_s}\ll 1.
\end{align}

Meanwhile, there would exist time selective effect when $f_{d \max}T_s\ll 1$, and otherwise not. Therefore, the integrated channel classification rule can be seen in Tab.~\ref{table:classification}.

\begin{table}[t]
\centering
\caption{Channel classification rules.}\label{table:classification}
\rowcolors{1}{white}{gray!25}
\begin{tabular}{|c|c|c|c|c|c|c|c|c|c|c|c|c|c|c|c|c|c|}
\hline
 Classification  Rule &  $\frac{Md}{cT_s}$   &   $f_{d \max}T_s$  \\
\hline
nonselective &  $\ll$ 1    &   $\ll$ 1    \\
\hline
antenna selective   &   $\geq$ 1    &  $\ll$ 1  \\
\hline
time selective     &  $\ll$ 1     &   $\geq$ 1 \\
  \hline
doubly selective &   $\geq$ 1  &   $\geq$ 1\\
 \hline
\end{tabular}
\end{table}

\begin{theorem}
Under the condition $M^B\rightarrow \infty$ and $L\rightarrow \infty$, the channels $\mathbf h_k((p+1)\eta)$ in Equ.~\eqref{equ:SP6} are progressively orthogonal for UAV communications with mmWave massive array antenna when UAVs have distinct DOAs or velocities.
\end{theorem}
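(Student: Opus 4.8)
The plan is to show that the normalized inner product between two channel vectors $\mathbf h_k((p+1)\eta)$ and $\mathbf h_{k'}((p+1)\eta)$ (for $k\neq k'$) vanishes as $M^B\to\infty$ and $L\to\infty$. Since $\mathbf h_k((p+1)\eta)=\alpha_k\,\mathbf p(f_{kd},\theta_k,p\eta)=\alpha_k\,\textup{vec}\!\left[\mathbf a(\theta_k,p\eta)\,\mathbf b^H(f_{kd})\right]$, the Kronecker/vec structure makes the inner product factor as a product of two sums: one over the $M^B$ antenna indices and one over the $L$ block indices. Concretely, I would write
\begin{align}\label{equ:thm-proof-inner}
\frac{\mathbf h_k^H \mathbf h_{k'}}{\|\mathbf h_k\|\,\|\mathbf h_{k'}\|}
=\left(\frac{1}{M^B}\sum_{m=0}^{M^B-1} e^{-j\frac{2\pi m d}{\lambda_c}(1+\frac{p\eta}{f_c})(\sin\theta_{k'}-\sin\theta_k)}\right)
\left(\frac{1}{L}\sum_{l=0}^{L-1} e^{-j2\pi l N_b T_s (f_{k'd}-f_{kd})}\right),
\end{align}
using that $|\alpha_k|$ cancels in the normalization and each steering/Doppler vector has norm $\sqrt{M^B}$ resp.\ $\sqrt{L}$.

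Next I would evaluate each factor as a finite geometric sum (Dirichlet kernel). The antenna-side factor equals
$\frac{1}{M^B}\cdot\frac{\sin\!\left(\pi M^B \nu_B\right)}{\sin\!\left(\pi \nu_B\right)}$ up to a unit-modulus phase, where $\nu_B=\frac{d}{\lambda_c}(1+\frac{p\eta}{f_c})(\sin\theta_{k'}-\sin\theta_k)$; similarly the Doppler-side factor equals $\frac{1}{L}\cdot\frac{\sin\!\left(\pi L \nu_D\right)}{\sin\!\left(\pi \nu_D\right)}$ with $\nu_D=N_bT_s(f_{k'd}-f_{kd})$. The numerators are bounded by $1$ in absolute value, so whenever $\nu_B\notin\mathbb Z$ the antenna factor is $O(1/M^B)\to 0$, and whenever $\nu_D\notin\mathbb Z$ the Doppler factor is $O(1/L)\to 0$. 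Under the hypothesis that the two UAVs have distinct DOAs \emph{or} distinct velocities, at least one of $\nu_B,\nu_D$ is nonzero; combined with the standard half-wavelength spacing assumption $d=\lambda_c/2$ (and the carrier frequencies being such that $|\nu_B|<1$), that nonzero value is not an integer, so the corresponding factor drives the whole product to zero. Since the other factor is bounded by $1$ in modulus, the product vanishes, which is exactly asymptotic orthogonality.

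I would then package this as: normalize, factor via the vec identity, bound by Dirichlet kernels, and invoke the distinctness hypothesis to kill at least one kernel. The main obstacle — and the only place requiring care — is the argument that a nonzero $\nu_B$ (or $\nu_D$) is genuinely bounded away from the integers so that the $1/\sin(\pi\nu)$ denominator stays bounded and the $1/M^B$ (or $1/L$) prefactor wins. This needs the physical constraints: $d=\lambda_c/2$ and $|\sin\theta_{k'}-\sin\theta_k|\le 2$ force $|\nu_B|\le (1+\frac{W}{2f_c})<1$ under the beam-squint-regime assumption $W\ll f_c$, so $\nu_B\in(-1,1)\setminus\{0\}$ is automatically non-integer; for $\nu_D$ one argues analogously that the maximum Doppler spread times $N_bT_s$ is small, so distinct velocities give a small nonzero $\nu_D$. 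A secondary subtlety is the degenerate case where the DOAs differ but $\sin\theta_k=\sin\theta_{k'}$ (mirror angles) — but then the hypothesis forces distinct velocities, so the Doppler factor handles it, and symmetrically if velocities coincide the DOA factor handles it; hence "distinct DOAs or velocities" is precisely the right condition and both sub-cases are covered.
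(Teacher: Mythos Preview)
Your proposal is correct and follows essentially the same approach as the paper: normalize the inner product, exploit the $\textup{vec}$ structure to factor it into an antenna-index sum times a block-index sum, evaluate each as a Dirichlet kernel, and let the distinct-DOA or distinct-Doppler hypothesis drive at least one factor to zero. Your treatment is in fact more careful than the paper's, which writes the limit compactly as $\delta(\theta_1-\theta_2)\delta(f_{1d}-f_{2d})$ without explicitly addressing the non-integer condition on $\nu_B,\nu_D$ or the mirror-angle degeneracy that you handle.
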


\begin{proof}
When $M^B\rightarrow \infty$ and $L\rightarrow \infty$, the relationship between $\mathbf{h}^H_1((p+1)\eta)$ and $\mathbf{h}_2((p+1)\eta)$ meets \eqref{equ:orthogonal8}, which is shown on the top of the next page. Moreover, $\xi_{k}$ is given by
\begin{align}
\xi_{k}=\left[\frac{d\left(\sin \theta_1 -\sin \theta_2\right)}{\lambda_c}+ p\eta \frac{d\left(\sin \theta_1 -\sin \theta_2\right) }{c}\right].
\end{align}

 \begin{figure*}[!t]
	\normalsize
	\setcounter{MYtempeqncnt}{\value{equation}}
	\setcounter{equation}{16}
	\begin{align}\label{equ:orthogonal8}
		&\lim_{M^B\rightarrow \infty,L\rightarrow \infty}\frac{1}{M^BL}\mathbf{h}^H_1((p+1)\eta)\mathbf{h}_2((p+1)\eta)=\lim_{M^B\rightarrow \infty}\frac{1}{M^BL}\alpha_{1}^{\star}\alpha_{2}\mathbf p^H(\theta_1)\mathbf p(\theta_2)\notag\\
		&=\lim_{M^B\rightarrow \infty,L\rightarrow \infty}\frac{1}{M^BL}\alpha_{1}^{\star}\alpha_{2}\left\{\textup{vec}\left[\mathbf a\left(\theta_{1},(p+1)\eta\right)\mathbf b^H\left(f_{1d}\right)\right]\right\}^H\textup{vec}\left[\mathbf a\left(\theta_{2},(p+1)\eta\right)\mathbf b^H\left(f_{2d}\right)\right]\notag\\
		&=\lim_{M^B\rightarrow \infty,L\rightarrow \infty}\frac{1}{M^BL}\sum_{l=1}^{L}\alpha_{1}^{\star}\alpha_{2}e^{-j2\pi (f_{1d}-f_{2d})  (l-1)N_bT_s}\mathbf a^H(\theta_1,(p+1)\eta)\mathbf a(\theta_2,(p+1)\eta)\notag\\
		&=\lim_{L\rightarrow\infty}\frac{\alpha_{1}^{\star}\alpha_{2}e^{-j\pi(M^B-1)\xi_{k}}e^{-j\pi(L-1)(f_{1d}-f_{2d})N_bT_s}}{L}\frac{\sin\left(\pi L(f_{1d}-f_{2d})N_bT_s \right)}{\sin\left(\pi(f_{1d}-f_{2d})N_bT_s \right)}\delta(\theta_1-\theta_2)\notag\\
		&=\alpha_{1}^{\star}\alpha_{2}e^{-j\pi (M^B-1)\xi_{k}}e^{-j\pi(L-1)(f_{1d}-f_{2d})N_bT_s}\delta(f_{1d}-f_{2d})\delta(\theta_1-\theta_2).
	\end{align}
	\setcounter{equation}{\value{MYtempeqncnt}}
	\addtocounter{equation}{1}
	\hrulefill
	\vspace*{4pt}
\end{figure*}

Therefore, the channels $\mathbf h_k((p+1)\eta)$ in Equ.~\eqref{equ:SP6} are progressively orthogonal for UAV communications  when the UAVs have distinct DOAs or velocities.

\end{proof}
\begin{figure}[t]
	\centering
	\includegraphics[width=75mm]{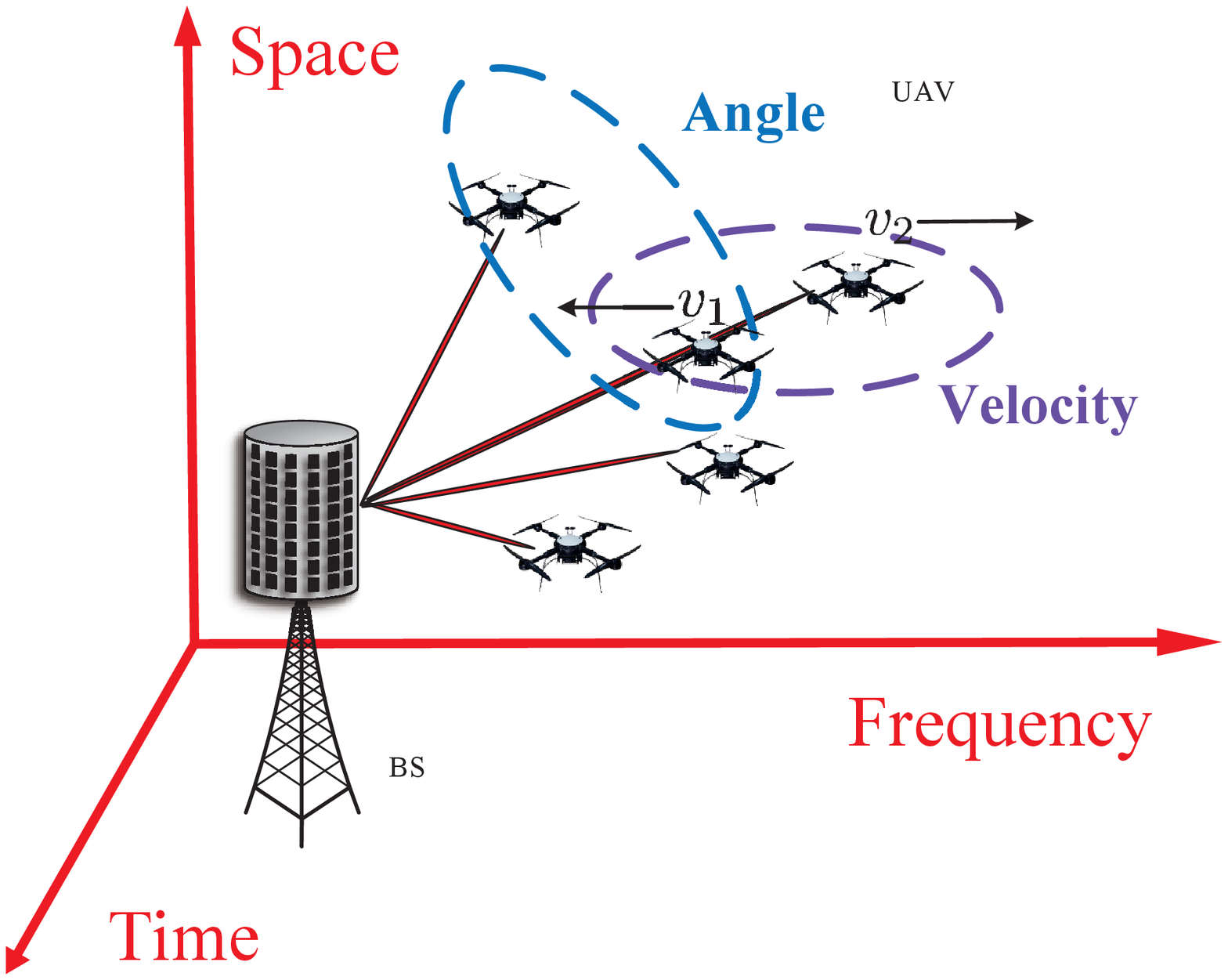}
	\caption{User scheduling scheme, where users with different DOAs or velocities could be simultaneously scheduled.}
	\label{fig:pdma1}
\end{figure}

According to Theorem 1, the sparse characteristic of UAV communications with mmWave massive array antenna makes it possible to schedule users according to the channel DOA or velocity information. As is vividly shown in Fig.~\ref{fig:pdma1}, the users with different DOA or velocity could be simultaneously scheduled. The corresponding user scheduling scheme can be named as angle division multiple access~(ADMA) and velocity division multiple access~(VDMA).
\section{Channel Tracking Strategy}\label{sec:channeltracking}
The beam squint effect makes the traditional channel transmission strategy inapplicable for UAV communications with mmWave massive array antenna. In this section, we will provide a GCS based channel tracking method for UAV communications with mmWave massive array antenna.
\subsection{Uplink Channel Tracking}
We here utilize the comb-type pilot channel estimation to track the time-varying channel. Let us assume $P$ of $N_c$ subcarriers are exploited as pilots, and the corresponding  subcarrier index set for user $k$ is $\mathcal{P}_k=\left\{p_{k,1},\dots,p_{k,P}\right\}$. Then, the channel of user $k$ in these pilot subcarrier can be stacked into a matrix as
\begin{align}\label{equ:SPs}
\mathbf H_k&=\alpha_{k}\left[\mathbf p\left(f_{kd},\theta_k,0\right),\dots,\mathbf p\left(f_{kd},\theta_k,(P-1)\eta\right)\right]\notag\\&= \mathbf{P}\left(\theta_k,f_{kd}\right)\alpha_k.
\end{align}

Assuming that all users send pilot symbol ``1'' over the selected sub-carriers while transmitting data symbols over other sub-carriers. Then, the received uplink pilots from the $M$ antennas and $P$ subcarriers over $L$ blocks can be derived as
\begin{align}
\mathbf Y=\sum_{k=1}^{K}\mathbf{H}_k\mathbf X+\mathbf W,
\end{align}
where $\mathbf X\in\mathbb{C}^{P\times P}$ is the diagonal pilot matrix whose elements are in the selected subcarriers, $K$ is the number of the scheduled UAVs, and $\mathbf W$ is the additive Gaussian noise whose elements are independently distributed as $\mathcal{CN}(0,\sigma^2)$.

Denote $\mathbf y=\textup{vec}(\mathbf Y\mathbf X^{-1})$, $\mathbf h_k=\textup{vec}(\mathbf H_k)$ and $\mathbf w=\textup{vec}(\mathbf W\mathbf X^{-1})$. Then, we have
\begin{align}
\mathbf y=\sum_{k=1}^{K}\mathbf{h}_k+\mathbf w=\boldsymbol{\alpha}\mathbf P(\boldsymbol\theta,\boldsymbol f_d)+\mathbf w.
\end{align}

According to the channel model \eqref{equ:hk}, the channel can be determined by the complex  channel gain parameter $\alpha_k$, the Doppler shifter parameter $f_{kd}$, and the DOA $\theta_k$. Therefore, the high dimension channel tracking problem can be transformed into estimating a few dominant channel physical parameters $\alpha_k$, $f_{kd}$, and $\theta_k$.

Since the number of the channel parameters is far less than the dimension of the channel, the compressive sensing~(CS) become an effective approach to estimate the unknown channel parameters.  Here, we adopt the GCS based parameter estimation method to track the channel since it can provide the off-grid angular parameter estimation. The initial number of the scheduled UAVs is set as $K_M$, where $K_M\geq K$ to guarantee enough degree of freedom for the UAV number estimation. Then, the problem can be formulated as
\begin{align}\label{equ:Op}
&\min_{\boldsymbol\theta,\boldsymbol f_d,\boldsymbol\alpha}\ \ \ \parallel\boldsymbol\alpha\parallel_0\notag\\
&s.t.\ \ \ \parallel\mathbf y-\boldsymbol{\alpha}\mathbf P(\boldsymbol\theta,\boldsymbol f_d)\parallel_2\leq \xi,
\end{align}
where $\parallel\boldsymbol\alpha\parallel_0$ represents the number of the nonzero entries of $\boldsymbol\alpha$, and $\xi$ is a small positive number that controls the error tolerability of the noise statistics.

The optimization in Equ.~\eqref{equ:Op} is an NP-hard problem, and the log-sum sparsity encouraging function, i.e.,
\begin{align}
J_0(\alpha)\triangleq \sum_{k=1}^{K}\log \left(\left|[\boldsymbol\alpha]_k\right|^2+\epsilon\right),
\end{align}
can be exploited to derive the one to  the equivalent optimization objective
\begin{align}\label{equ:Op2}
&\min_{\boldsymbol\theta,\boldsymbol f_d,\boldsymbol\alpha}\ \ \ J_0(\alpha)\notag\\
&s.t.\ \ \ \parallel\mathbf y-\boldsymbol{\alpha}\mathbf P(\boldsymbol\theta,\boldsymbol f_d)\parallel_2\leq \xi,
\end{align}
where $\epsilon$ is a iterative parameter.

Next, we introduce the data fitting term $\lambda \parallel\mathbf y-\boldsymbol{\alpha}\mathbf P(\boldsymbol\theta,\boldsymbol f_d)\parallel_2^2$ to eliminate the constraint $\parallel\mathbf y-\boldsymbol{\alpha}\mathbf P(\boldsymbol\theta,\boldsymbol f_d)\parallel_2\leq \xi$. In this way, the optimization problem is converted into
\begin{align}
\min_{\boldsymbol\theta,\boldsymbol f_d,\boldsymbol\alpha}\ J_{\lambda}(\boldsymbol\theta,\boldsymbol f_d,\boldsymbol\alpha)=\sum_{k=1}^{K}&\log \left(\left|[\boldsymbol\alpha]_k\right|^2+\epsilon\right)\notag\\
&+\lambda {\parallel\mathbf y-\boldsymbol{\alpha}\mathbf P(\boldsymbol\theta,\boldsymbol f_d)\parallel}^2_2,
\end{align}
where the parameter $\lambda$ determines the compromise between the sparsity and the data fitting deviation.
The larger $\lambda$ puts more weight on the fitting deviation, and therefore produces a better-fitting solution, but would also increase the possibility of overestimation. On the contrary, a smaller one will make the optimization converge to a sparser result and an underestimated solution.
Here, $\lambda$ is set as the inverse of the noise variance of vector $\boldsymbol{\alpha}$'s elements to achieve the tradeoff between the sparsity and the data-fitting deviation, which is given by
\begin{align}\label{equ:lambda}
\lambda=\max\left( \lambda_0\frac{1}{\parallel\mathbf y-\boldsymbol{\alpha}\mathbf P(\boldsymbol\theta,\boldsymbol f_d)\parallel_2^2},\lambda_{\min}\right),
\end{align}
where $\lambda_{0}$ and $\lambda_{\min}$ are two constants. Moreover, during the optimization progress, $\lambda$ will be dynamically adjusted until it reaches  $\lambda_{\min}$, and $\lambda_{0}$ remains fixed to balance between the sparsity and the data-fitting deviation.

By exploiting the maximization-minimization~(MM) iterative method \cite{mm1}, the surrogate function $S_0\left(\boldsymbol{\alpha}|\boldsymbol{\alpha}^{(n)}\right)$ can be derived to minimize in the iterations of maximizing $J_0(\boldsymbol\alpha)$, which is given by
\begin{align}\label{equ:Op5} &\sum_{k=1}^{K}\left[\left(\left|[\boldsymbol\alpha^{(n)}]_k\right|^2+\epsilon\right)+\frac{\left(\left|[\boldsymbol\alpha]_k\right|^2+\epsilon\right)-\left(\left|[\boldsymbol\alpha^{(n)}]_k\right|^2+\epsilon\right)}{\left|[\boldsymbol\alpha^{(n)}]_k\right|^2+\epsilon}\right]\notag \\ &\geq J_0(\boldsymbol{\alpha}),
\end{align}
where $\boldsymbol \alpha^{(n)}$ is the estimated value of the complex gain at the $n$-th iteration. 

The last inequality in~\eqref{equ:Op5} results from the convexity of $-J_0(\boldsymbol{\alpha})$, and the equality will be attained only when $\boldsymbol{\alpha}^{(n)}=\boldsymbol{\alpha}$. Therefore, at the $(n+1)$-th iteration, it will hold that
\begin{align}\label{equ:Op6}
S_0\left(\boldsymbol{\alpha}|\boldsymbol{\alpha}^{(n)}\right)-J_0(\boldsymbol{\alpha})\geq S_0\left(\boldsymbol{\alpha}^{(n)}|\boldsymbol{\alpha}^{(n)}\right)-J_0(\boldsymbol{\alpha}^{(n)}).
\end{align}

Then, the optimization problem  can be transformed into
\begin{align}\label{equ:Op7}
&\min_{\boldsymbol \theta, \boldsymbol f_d, \boldsymbol \alpha}S_{\lambda}\left(\boldsymbol{\theta},\boldsymbol{f}_d,\boldsymbol{\alpha}|\boldsymbol{\alpha}^{(n)}\right)\notag\\
&=S_0\left(\boldsymbol{\alpha}|\boldsymbol{\alpha}^{(n)}\right)+\lambda \parallel\mathbf y-\boldsymbol{\alpha}\mathbf P(\boldsymbol\theta,\boldsymbol f_d)\parallel_2^2\notag\\
&=\sum_{k=1}^{K}\frac{\left|[\boldsymbol\alpha]_k\right|^2}{\left|[\boldsymbol\alpha^{(n)}]_k\right|^2+\epsilon}+\lambda \parallel\mathbf y-\boldsymbol{\alpha}\mathbf P(\boldsymbol\theta,\boldsymbol f_d)\parallel_2^2+C\left(\boldsymbol \alpha^{(n)}\right),
\end{align}
where $C\left(\boldsymbol \alpha^{(n)}\right)$ is a constant that is independent of $\boldsymbol \theta, \boldsymbol f_d, \boldsymbol \alpha$.

We denote $\mathbf{D}^{(n)}=\textup{diag}\left\{\frac{\left|[\boldsymbol\alpha]_1\right|^2}{\left|[\boldsymbol\alpha^{(n)}]_1\right|^2+\epsilon},\dots,\frac{\left|[\boldsymbol\alpha]_{K_i}\right|^2}{\left|[\boldsymbol\alpha^{(n)}]_{K_i}\right|^2+\epsilon}\right\}$, and it can be readily derived that
\begin{align}
&S_{\lambda}\left(\boldsymbol{\theta},\boldsymbol{f}_d,\boldsymbol{\alpha}|\boldsymbol{\alpha}^{(n)}\right)=\notag\\&\ \ \ \ \ \ \boldsymbol\alpha ^{H}\mathbf{D^{(n)}}\boldsymbol\alpha+\lambda \parallel\mathbf y-\boldsymbol{\alpha}\mathbf P(\boldsymbol\theta,\boldsymbol f_d)\parallel_2^2+C\left(\boldsymbol \alpha^{(n)}\right).
\end{align}

According to the above analysis, we can further compute that
\begin{align}\label{equ:Op9}
&J_{\lambda}(\boldsymbol\theta^{(n+1)},\boldsymbol f_d^{(n+1)},\boldsymbol\alpha^{(n+1)})\notag\\
&=J_0(\alpha^{(n+1)})+\lambda \parallel\mathbf y-\boldsymbol{\alpha^{(n+1)}}\mathbf P\left(\boldsymbol \theta^{(n+1)}, \boldsymbol f_d^{(n+1)}\right)\parallel_2^2\notag\\
&\leq S_0\left(\boldsymbol{\alpha}^{(n)}|\boldsymbol{\alpha}^{(n)}\right)-S_0\left(\boldsymbol{\alpha}^{(n)}|\boldsymbol{\alpha}^{(n)}\right)+J_0(\alpha^{(n)})+\notag\\
&\lambda \!\parallel\mathbf y-\boldsymbol{\alpha^{(n)}}\mathbf P\left(\boldsymbol \theta^{(n)}, \boldsymbol f_d^{(n)}\right)\parallel_2^2\notag\\
&= J_{\lambda}(\boldsymbol\theta^{(n)},\boldsymbol f_d^{(n)},\boldsymbol\alpha^{(n)}).
\end{align}

Equ.~\eqref{equ:Op9} means that decreasing the surrogate function $S_{\lambda}\left(\boldsymbol{\theta},\boldsymbol{f}_d,\boldsymbol{\alpha}|\boldsymbol{\alpha}^{(n)}\right)$ indeed decreases $J_{\lambda}(\boldsymbol\theta,\boldsymbol f_d,\boldsymbol\alpha)$, which guarantees the effectiveness of optimizing~\eqref{equ:Op7}. Therefore, we only need to minimize the surrogate function $S_{\lambda}\left(\boldsymbol{\theta},\boldsymbol{f}_d,\boldsymbol{\alpha}|\boldsymbol{\alpha}^{(n)}\right)$. Then, for the given $\boldsymbol{\theta}$ and $
\boldsymbol{f}_d$, the optimal value of $\boldsymbol{\alpha}$ can be immediately derived as
\begin{align}\label{equ:Op10}
&\boldsymbol\alpha^{\star}(\boldsymbol\theta,\boldsymbol f_d)=\notag\\
&\left(\mathbf P^H\left(\boldsymbol \theta, \boldsymbol f_d\right)\mathbf P\left(\boldsymbol \theta, \boldsymbol f_d\right)+\lambda^{(-1)}\mathbf{D}^{(n)} \right)^{-1} \mathbf P^H\left(\boldsymbol \theta, \boldsymbol f_d\right)\mathbf{y}.
\end{align}

When we substitute  $\boldsymbol\alpha^{\star}(\boldsymbol\theta,\boldsymbol f_d)$ into \eqref{equ:Op7}, the optimization are converted into
\begin{align}\label{equ:Op11}
&\min_{\boldsymbol \theta, \boldsymbol f_d}S_{1}\left(\boldsymbol{\theta},\boldsymbol{f}_d\right)=C\left(\boldsymbol \alpha^{(n)}\right)-\mathbf{y}^H\mathbf P\left(\boldsymbol \theta, \boldsymbol f_d\right)\notag\\
&\left(\mathbf P^H\left(\boldsymbol \theta, \boldsymbol f_d\right)\mathbf P\left(\boldsymbol \theta, \boldsymbol f_d\right)+\lambda^{(-1)}\mathbf{D}^{(n)} \right)^{-1} \mathbf P^H\left(\boldsymbol \theta, \boldsymbol f_d\right)\mathbf{y}.
\end{align}


Since $\min_{\boldsymbol \theta, \boldsymbol f_d}S_{1}\left(\boldsymbol{\theta},\boldsymbol{f}_d\right)$ is differentiable with respect to $\boldsymbol{\theta}$ and $\boldsymbol{f}_d$, the gradient descent can be exploited in each iteration to derive $\boldsymbol \theta$ and $\boldsymbol f_d$.

Define 
\begin{align}
&\mathbf Z=\mathbf P\left(\boldsymbol \theta, \boldsymbol f_d\right)\mathbf R \mathbf P^H\left(\boldsymbol \theta, \boldsymbol f_d\right)\mathbf{y},\\
&\mathbf R=\left(\mathbf P^H\left(\boldsymbol \theta, \boldsymbol f_d\right)\mathbf P\left(\boldsymbol \theta, \boldsymbol f_d\right)+\lambda^{(-1)}\mathbf{D}^{(n)} \right)^{-1}. 
\end{align}

Then, it can be readily derived as \eqref{equ:derivation}, which is shown on the top of the next page. 

 \begin{figure*}[!t]
	\normalsize
	\setcounter{MYtempeqncnt}{\value{equation}}
	\setcounter{equation}{35}
\begin{align}\label{equ:derivation}
\frac{\partial S_1\left(\boldsymbol \theta, \boldsymbol f_d\right)} {\partial {\theta }_{k}}=\text{tr} \left\{{\left(\frac{\partial S_1\left(\boldsymbol \theta, \boldsymbol f_d\right)} {\partial \boldsymbol{Z}}\right)}^{T}\frac{\partial \boldsymbol{Z}} {\partial {\theta }_{k}}\right\}+\text{tr} \left\{{\left(\frac{\partial S_1\left(\boldsymbol \theta, \boldsymbol f_d\right)} {\partial {\boldsymbol{Z}}^{*}}\right)}^{T}\frac{\partial {\boldsymbol{Z}}^{*}} {\partial {\theta }_{k}}\right\},
\end{align}
	\setcounter{equation}{\value{MYtempeqncnt}}
	\addtocounter{equation}{1}
	\hrulefill
	\vspace*{4pt}
\end{figure*}

since it holds that 
\begin{align}
&{\frac{\partial S_1\left(\boldsymbol \theta, \boldsymbol f_d\right)} {\partial \boldsymbol{Z}}} =\frac{\partial } {\partial \boldsymbol{Z}}\text{tr} \left\{-\boldsymbol{y} {\boldsymbol{y}}^{H}\boldsymbol{Z}\right\}=-{\left(\boldsymbol{y} {\boldsymbol{y}}^{H}\right)}^{T}\\
&\frac{\partial S_1\left(\boldsymbol \theta, \boldsymbol f_d\right)} {\partial {\boldsymbol{Z}}^{*}} =\frac{\partial } {\partial {\boldsymbol{Z}}^{*}}\text{tr} \left\{-\boldsymbol{y} {\boldsymbol{y}}^{H}\boldsymbol{Z}\right\}=\mathbf{0},
\end{align}
the gradient of $S_{1}\left(\boldsymbol{\theta},\boldsymbol{f}_d\right)$ corresponding to ${\theta_k}$ can be derived as
\begin{align}
\frac{\partial S_1\left(\boldsymbol \theta, \boldsymbol f_d\right)} {\partial {\theta }_{k}}=\text{tr} \left\{-\boldsymbol{y} {\boldsymbol{y}}^{H}\frac{\partial \boldsymbol{Z}} {\partial {\theta }_{k}} \right\},
\end{align}
where $\frac{\partial \boldsymbol{Z}} {\partial {\theta }_{k}}$ is given
by
\begin{align}
&\frac{\partial \boldsymbol{Z}} {\partial {\theta }_{k}}=\frac{\partial \mathbf P\left(\boldsymbol \theta, \boldsymbol f_d\right)} {\partial {\theta }_{k}} \mathbf R{\mathbf P}^{H} \left(\boldsymbol \theta, \boldsymbol f_d\right)+\mathbf P\left(\boldsymbol \theta, \boldsymbol f_d\right)\mathbf R\frac{\partial {\mathbf P}^{H}\left(\boldsymbol \theta, \boldsymbol f_d\right)} {\partial {\theta }_{k}}-\notag\\
&\mathbf P\left(\boldsymbol \theta, \boldsymbol f_d\right)\mathbf R \left(\frac{\partial {\mathbf P}^{H} \left(\boldsymbol \theta, \boldsymbol f_d\right)} {\partial {\theta }_{k}}\mathbf P \left(\boldsymbol \theta, \boldsymbol f_d\right)+{\mathbf P}^{H} \left(\boldsymbol \theta, \boldsymbol f_d\right)\frac{\partial \mathbf P\left(\boldsymbol \theta, \boldsymbol f_d\right)} {\partial {\theta }_{k}}\right)\notag\\&\mathbf R {\mathbf P}^{H} \left(\boldsymbol \theta, \boldsymbol f_d\right).
\end{align}
Moreover, $\frac{\partial \mathbf P\left(\boldsymbol{\theta }\right)} {\partial {\theta }_{k}}$ can be derived as
\begin{align}
\frac{\partial \mathbf P\left(\boldsymbol \theta, \boldsymbol f_d\right)} {\partial {\theta }_{k}}=\mathbf Q \odot \mathbf P\left(\boldsymbol \theta, \boldsymbol f_d\right),
\end{align}
where $\mathbf Q\in \mathbb{C}^{M^BPT\times K}$, and the $k$-th row of $\mathbf Q$ is given as \eqref{equ:q} while the elements of other rows are all zero.

\begin{figure*}[!t]
	\normalsize
	\setcounter{MYtempeqncnt}{\value{equation}}
	\setcounter{equation}{41}
	\begin{align}\label{equ:q}
\mathbf q=-j2\pi\cdot \left[ \underbrace{\textup{vec} \left\lbrace[0:M^B-1]^T*(1+\eta/f_c*\mathbf 0_{L\times 1}^T)\right\rbrace^T,\dots,\textup{vec} \left\lbrace[0:M^B-1]^T*(1+\eta/f_c*\mathbf {(P-1)}_{L\times 1}^T)\right\rbrace^T}_{P}\right]\in \mathbb{C}^{M^BPL\times 1},
	\end{align}
	\setcounter{equation}{\value{MYtempeqncnt}}
	\addtocounter{equation}{1}
	\hrulefill
	\vspace*{4pt}
\end{figure*}

The gradient of $S_{1}\left(\boldsymbol{\theta},\boldsymbol{f}_d\right)$ corresponding to ${f_{kd}}$ can be derived similarly, which is given by
\begin{align}
\frac{\partial S_1\left(\boldsymbol \theta, \boldsymbol f_d\right)} {\partial {f }_{kd}}=\text{tr} \left\{-\boldsymbol{y} {\boldsymbol{y}}^{H}\frac{\partial \boldsymbol{Z}} {\partial {f}_{kd}} \right\},
\end{align}
where $\frac{\partial \boldsymbol{Z}} {\partial {f}_{kd}}$ is given
by
\begin{align}
 &\frac{\partial \boldsymbol{Z}} {\partial {\theta }_{k}}  =\frac{\partial \mathbf P\left(\boldsymbol \theta, \boldsymbol f_d\right)} {\partial {f }_{kd}} \mathbf R{\mathbf P}^{H} \left(\boldsymbol \theta, \boldsymbol f_d\right)+\mathbf P\left(\boldsymbol \theta, \boldsymbol f_d\right)\mathbf R\frac{\partial {\mathbf P}^{H} \left(\boldsymbol \theta, \boldsymbol f_d\right)} {\partial {f }_{kd}}-\notag\\
&\mathbf P\left(\boldsymbol \theta, \boldsymbol f_d\right)\mathbf R \left(\frac{\partial {\mathbf P}^{H} \left(\boldsymbol \theta, \boldsymbol f_d\right)} {\partial {f }_{kd}}\mathbf P \left(\boldsymbol \theta, \boldsymbol f_d\right)+{\mathbf P}^{H} \left(\boldsymbol \theta, \boldsymbol f_d\right)\frac{\partial \mathbf P\left(\boldsymbol \theta, \boldsymbol f_d\right)} {\partial {f }_{kd}}\right)\notag\\
&\mathbf R {\mathbf P}^{H} \left(\boldsymbol \theta, \boldsymbol f_d\right).
\end{align}
Moreover,  $\frac{\partial \mathbf P\left(\boldsymbol \theta, \boldsymbol f_d\right)} {\partial {f }_{kd}}$ can be derived as
\begin{align}
\frac{\partial \mathbf P\left(\boldsymbol \theta, \boldsymbol f_d\right)} {\partial {f }_{kd}}=\mathbf U \odot \mathbf P\left(\boldsymbol \theta, \boldsymbol f_d\right),
\end{align}
where $\mathbf U\in \mathbb{C}^{MPT\times G_k}$, and the $k$-th row of $\mathbf U$ is given by $\left[ \underbrace{\mathbf u^T,\dots,\mathbf u^T}_{P}\right]^T $, and the elements of other rows are all zero.
Moreover, $\mathbf u$ is given by
\begin{align}
\mathbf u=\textup{vec} \left\lbrace -1j*2*\pi*\mathbf 1_{M^B\times 1}*[0:L-1]\right\rbrace \in \mathbb{C}^{M^BL\times 1}.
\end{align}

The concrete steps of the proposed algorithm are displayed in Alg.~\ref{alg:algorithm1}. Since the dictionary of the  GCS approach is not pre-defined, which remains unknown in the process of the parameter estimation, the proposed channel tracking method overcomes the performance degradation of the traditional on-grid CS methods due to the grid mismatch.

\begin{algorithm}[t]
	\caption{:Uplink channel tracking}
	\label{alg:algorithm1}
	\begin{itemize}
		\item \textbf{Step 1:} 
		Set $n=0$ and $K=K_M$; initialize $\boldsymbol{\alpha^{(0)}}$, $\boldsymbol{f}_d$, and $\boldsymbol{\theta}$; compute $\lambda^{(n)}$ according to~\eqref{equ:lambda}.		
		\item \textbf{Step 2:} At the $n$-th iteration, construct the surrogate function according to \eqref{equ:Op11};
		\item \textbf{Step 3:}By exploiting the gradient descend, the surrogate function is optimized to find a new iterative estimate of $\boldsymbol \theta$ and $\boldsymbol{f}_d$;
		\item \textbf{Step 4:} Calculate $\boldsymbol{\alpha}^{(n)}$ according to \eqref{equ:Op10}, and update $\lambda^{(n+1)}$ according to \eqref{equ:lambda};
		\item \textbf{Step 5:} Compute $\gamma=\left\|\boldsymbol{\alpha^{(n+1)}}-\boldsymbol{\alpha^{(n)}}\right\|_2 $. If $\gamma<\sqrt{\epsilon}$, then $\epsilon=\max\left\lbrace \frac{1}{\epsilon}, \epsilon_{\min}\right\rbrace $;
		\item \textbf{Step 6:} For $l$ satisfying $\left[ \boldsymbol{\alpha}^{(n+1)}\right]_l<\alpha_{\min}$, remove $\left[ \boldsymbol{\alpha}^{(n+1)}\right]_l$, $\left[ \boldsymbol{f}_d^{(n+1)}\right]_l$, and $\left[ \boldsymbol{\theta}^{(n+1)}\right]_l$ from $\boldsymbol{\alpha}^{(n+1)}$, $\boldsymbol{f}_d^{(n+1)}$, and $ \boldsymbol{\theta}^{(n+1)}$;
		\item \textbf{Step 7:} Set $n=n+1$; if $\gamma<\gamma^S$, and go to \textbf{Step 2}, where $\gamma^S$ is the hard threshold as the terminating condition. Otherwise stop the iteration, and output the results.
	\end{itemize}
\end{algorithm}
\subsection{Downlink Channel Tracking with the Angular and Doppler Shift Reciprocity}
\subsubsection{Downlink Channel Representation}
According to \cite{Bultitude,METIS}, the physical DOAs $\theta_{k}$ are approximately identical for the uplink and downlink channel transmission, namely,
\begin{align}
\theta^D_{k}=\theta_{k},
\end{align}
which is called \emph{angular reciprocity}. The angular reciprocity holds true at the case that the frequency interval between the uplink and downlink channel is within several GHz. 

Meanwhile, since the relative velocity of the uplink and downlink channel are the same, the downlink Doppler shift can also be derived from the uplink one, which can be name as \emph{Doppler shift reciprocity}. We denote the downlink channel carrier frequency and its corresponding carrier wavelength as $f_c^D$ and $\lambda_c^D$. With the uplink Doppler shift $f_{kd}$, the downlink one can be derived as
\begin{align}\label{equ:doppler}
f_{kd}^D=f_{kd}\frac{f_c^D}{f_c}.
\end{align}

Then, with both the angular and Doppler shift reciprocity, the downlink channel over all the sub-carriers can be formulated as
\begin{align}\label{equ:dcf}
\mathbf g_k(f)&=\alpha_{k}^D\textup{vec}\left[\mathbf a\left(\theta_{k}^D,f\right)\mathbf b^H\left(f_{kd}^D\right)\right]\notag\\
&=\alpha_{k}\mathbf p\left(f_{kd}^D,\theta_k^D,f\right).
\end{align}

\subsubsection{Downlink Channel Tracking} According to~\eqref{equ:dcf}, the only unknown parameter in the downlink channel $\mathbf g_k(f)$ is the complex gain $\alpha^D_{k}$. Therefore, we only need to estimate $\alpha^D_{k}$ to track the total downlink channel. 

Denote the beamforming vector for user $k$ of the $n$-th sub-carrier as 
\begin{align}
\mathbf{g}_k(n)=\left[ \mathbf a^D \left( \theta_k^D,n\right) \right] ^H,
\end{align}
and then overall beamforming matrix at BS can be expressed as
\begin{align}\label{equ:beamforming}
\mathbf{g}^D(n)=\sum_{k=1}^{K_i}\mathbf{g}_k(n),
\end{align}
where the beamforming vector $\mathbf{g}^D(n)$ says that BS formulate beams towards DOA of the scheduled UAVs. 

Denote $s$ as the training symbol, and then we select the sub-carrier $n$ for downlink channel tracking. The angle domain sparse channels permit UAVs with different DOAs to be trained by the same pilot sequence, and therefore the same one pilot symbol $s$ can be simultaneous utilized to decrease the training overhead.

 The received signal of the $p$-th UAV at the $n$-th sub-carrier can be expressed as
\begin{align}\label{equ:downlinkr}
&y_p(l,n)=\mathbf{h}^D_k(l,n)\mathbf{g}^D(n)s+\omega_k^D\notag\\
&=\mathbf{h}^D_k(l,n)\mathbf{g}_k(n)s+\sum_{k'=1,k'\neq k}^{K_i}\mathbf{h}^D_{k'}(l,n)\mathbf{g}_{k'}(n)s+\omega_k^D,
\end{align}
where $\sum_{k'=1,k'\neq k}^{K_i}\mathbf{h}^D_{k'}(l,n)\mathbf{g}_{k'}(n)s$ is the interference. Nevertheless, since all the scheduled UAVs has distinct DOA, it holds that $\sum_{k'=1,k'\neq k}^{K_i}\mathbf{h}^D_{k'}(l,n)\mathbf{g}_{k'}(n)s\approx 0$.

Then, the $p$-th UAV sums the received signals from all the sub-carrier, which is given by
\begin{align}\label{equ:downlink2}
y_p(l)&=\sum_{n=1}^{N}y_p(l,n)=N\alpha^D_{k}e^{-j2\pi f_{kd}^D lN_bT_s}s.
\end{align}

Therefore, the downlink channel complex gain can be derived as
\begin{align}\label{equ:downlinkc}
\hat\alpha_k^D=\frac{y_p(l)}{Ne^{-j2\pi f_{kd}^D lN_bT_s}s},
\end{align}
while the downlink channel can be reconstructed as
\begin{align}\label{equ:dccf}
\hat{\mathbf g}_k(f)=\hat{\alpha}_{k}^D\textup{vec}\left[\mathbf a\left(\hat{\theta}_{k}^D,f\right)\mathbf b^H\left(\hat{f}_{kd}^D\right)\right].
\end{align}

With both the angle reciprocity and Doppler shift reciprocity, the unknown estimated coefficient of each scheduled UAV at the downlink channel training period is only the complex gain, which greatly decreases the training overhead. Besides, since the beamforming is executed at BS, there is no necessity for the scheduled UAVs to know the Doppler and angle signature of themselves. In this way, the feedback cost is greatly decreased for UAV communications. 

\subsection{Simplified DOA Tracking With Kalman Filter}
Denote the inverse discrete Foulier transformation (IDFT) of the channel $\mathbf{h}_k(l,(p+1)\eta)$ between BS and UAV $k$ in Equ.~\eqref{equ:SP4} as
\begin{align}\label{equ:DFT}
\tilde{\mathbf h}_k(l,(p+1)\eta)=\mathbf{F}^H\mathbf{h}_k(l,(p+1)\eta),
\end{align}
where $\mathbf F$ is the normalized $M^B\times M^B$ IDFT matrix with $\left[\mathbf F\right]_{rq}=e^{j\frac{2\pi}{M}rq}/\sqrt{M^B}$.
On the basis of \eqref{equ:DFT}, the $q$-th element of $\tilde{\mathbf h}_k(l,(p+1)\eta)$ can be derived as \eqref{equ:channeldft}, which is shown on the top of the next page.

\begin{figure*}[!t]
	\normalsize
	\setcounter{MYtempeqncnt}{\value{equation}}
	\setcounter{equation}{56}
\begin{align}\label{equ:channeldft}
\left[\tilde{\mathbf h}_k(l,(p+1)\eta) \right]_{q}&=
\frac{1}{\sqrt{M^B}}\sum_{m=0}^{M^B-1}\alpha_{k}e^{-j2\pi  f_{kd} lN_bT_s }e^{j 2\pi m\left[\frac{q}{M}-\frac{d \sin\theta_{k}}{\lambda}- \frac{p\eta d\sin \theta_k }{c}\right]}\notag\\
&=\frac{1}{\sqrt{M^B}}\alpha_{k}e^{(-j2\pi  f_{kd} lN_bT_s-j\frac{M^B-1}{2}\eta_{k})}\frac{\sin(\frac{M^B2\pi m\left[\frac{q}{M^B}-\frac{d \sin\theta_{k}}{\lambda_c}- \frac{ p\eta d\sin \theta_k }{c}\right]}{2})}{\sin(\frac{2\pi m\left[\frac{q}{M^B}-\frac{d \sin\theta_{k}}{\lambda_c}- \frac{ p\eta d\sin \theta_k }{c}\right]}{2})}.
\end{align}
	\setcounter{equation}{\value{MYtempeqncnt}}
	\addtocounter{equation}{1}
	\hrulefill
	\vspace*{4pt}
\end{figure*}

When $M^B$ is large, there always exists $q$ meeting $\frac{q}{M^B}-\frac{d \sin\theta_{k}}{\lambda_c}- \frac{ p\eta d\sin \theta_k }{c}=0$ for the given $p$ and $\theta_{k}$. In this case, all the channel power is concentrated on this point $p$, which is given by
\begin{align}\label{equ:qq}
q=\frac{d M^B\sin\theta_{k}}{\lambda_c}+\frac{M^B p\eta d\sin \theta_k }{c}.
\end{align}

Equ.~\eqref{equ:qq} also indicates that UAVs with different DOAs will exhibit different spatial distribution.
According to~\eqref{equ:channeldft}, we can obtain
\begin{align}\label{equ:qck}
q_k(l)=\frac{dM}{\lambda}\sin {{\theta}_k}(l)\left(1+\frac{n\eta}{f_c}\right)+u_k(l),
\end{align}
where $u_k(l)$ is the measurement noise with variance ${Q}_{u_k}$, and Equ.~\eqref{equ:qck} can be named as the \emph{measurement equation} for DOA tracking.

Denote $\mathbf{\Psi}_k(l)=[\theta_k(l),\dot{\theta }_k(l)]$ as the system states of DOA Tracking, where $\theta_k(l)$ and $\dot{\theta}_k(l)$  represent DOA and angular rate of user $k$ in block $l$ respectively. Then, the kinematic model can be applied to characterize the variation of DOA as \cite{trackingangle2}
\begin{align}\label{equ:systemequation}
\mathbf \Psi_k(l)&=\mathbf{\Phi}\mathbf{\Psi}_k(l-1)+\boldsymbol \omega_k(l)\notag\\
&=
\begin{bmatrix}
1 &NT_s \\
 0& 1
  \end{bmatrix}\mathbf{\Psi}_k(l-1)+\boldsymbol \omega_k(l),
\end{align}
where $\boldsymbol \omega_k(m)$ is the system noise that  meets $\E[\boldsymbol\omega_k(m)\boldsymbol\omega_k^H(m)]=\mathbf{Q}_{\mathbf \omega_k}$, and (\ref{equ:systemequation}) can be named as the \emph{system equation} for DOA tracking.

According to \eqref{equ:systemequation} and \eqref{equ:qck}, the DOA tracking procedure is a typical nonlinear system. In this case, extended Kalman filter~(EKF) would serve as a common approach for DOA tracking, and the detailed steps are illustrated in Alg.~\ref{alg:algorithm2}.
\begin{algorithm}[t]
\caption{: DOA Tracking Algorithm}
\label{alg:algorithm2}
\begin{itemize}
\item \textbf{Step 1:} {Initialization:} obtain the prior DOA information as
 $\hat{{ \boldsymbol \theta}}_k(1)=[ \theta_k(1),0]$, $\boldsymbol\kappa(1)=\mathbf 0$;
\item \textbf{Step 2:} {Compute the Jacobi matrix for the system equation:}
$\boldsymbol \phi_{k}(l-1)=\frac{\partial [\boldsymbol{\psi}\boldsymbol{\theta}_k(l-1)]}{\partial \boldsymbol{\theta}_k(l-1)}=\boldsymbol \phi.$
\item \textbf{Step 3:} {Compute the Jacobi matrix for the measurement equation:}
According to~\eqref{equ:channeldft}, we can derive the elements of the Jacobi matrix corresponding to the $p+1$-th sub-carrier as $\varpi_{kp}(l-1)=\frac{dM}{\lambda}\frac{\partial [\sin \bar\theta_k(l-1)\left(1+\frac{n\eta}{f_c}\right)]}{\partial \mathbf \theta_k(l-1)}=\frac{dM}{\lambda}\cos \bar\theta_k(l-1)\left(1+\frac{p\eta}{f_c}\right),$ and the Jacobi matrix can be stacked from $\varpi_{kp}(l-1)$ as $\boldsymbol{\varpi}_{k}(l-1)$.

\item \textbf{Step 4:} {Prediction of the system states:}
${{ \boldsymbol \theta}}_k(l|l-1)=\boldsymbol{\phi}{{ \boldsymbol \theta}}_k(l-1);$
\item \textbf{Step 5:} {Minimize the predicted mean square error~(MSE):}
$\boldsymbol\kappa(l|l-1)=\boldsymbol \phi_{k}(l-1)\boldsymbol\kappa(l-1)\boldsymbol \phi_{k}(l-1)^H+\mathbf{Q}_{\boldsymbol \omega_k};
$
\item \textbf{Step 6:} {Compute the Kalman gain matrix:}
$
\boldsymbol\Upsilon(\zeta)=\boldsymbol\kappa(l|l-1)\boldsymbol\varpi_{k}(l-1)^H[\boldsymbol\varpi_{k}(l-1)\boldsymbol\kappa(l|l-1)\boldsymbol\varpi_{k}(l-1)^H+{Q}_{ u_k}]^{-1};
$
\item \textbf{Step 7:} {DOA tracking:}
$
{{ \boldsymbol \theta}}_k(l)={{ \boldsymbol \theta}}_k(l|l-1)+\boldsymbol\Upsilon(l)[q_{c,k}(l)-\frac{dM\sin \bar\theta_k(l|l-1)}{\lambda}];
$
\item \textbf{Step 8:} {Compute minimum mean square error~(MMSE):}
$
\boldsymbol\kappa(l)=[\mathbf I-\boldsymbol\Upsilon(l)\boldsymbol \varpi_{k}(l)]\boldsymbol\kappa(l|l-1);
$
\item \textbf{Step 9:} Go to next block $l +1$.
\end{itemize}
\end{algorithm}

According to~Alg.~\ref{alg:algorithm2}, the DOA can be realtimely tracked by the Kalman filter based predicting and updating. Channel tracking is transmitted to tracking the Doppler information and complex gain information, which decreases the training overhead. The whole channel tracking procedure is concluded in Fig.~\ref{fig:system}. For clarity, we summarize  channel tracking procedure:
\begin{itemize}
	\item Uplink DOA, Doppler shift, and complex gain tracking;
	\item Uplink channel reconstruction;
	\item Downlink DOA and Doppler derivation with angle and Doppler shift reciprocity;
	\item Downlink complex gain tracking;
	\item Downlink channel reconstruction.
\end{itemize}

\begin{figure*}[t]
	\centering
	\includegraphics[width=130mm]{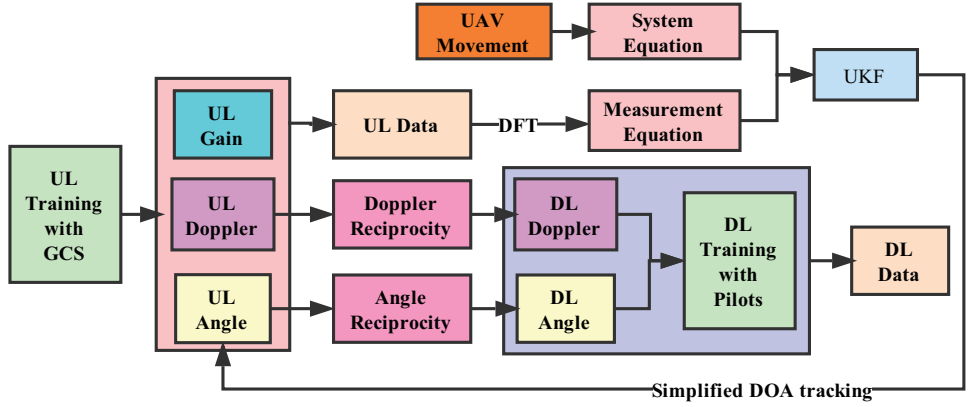}
	\caption{Efficient channel tracking strategy for UAV communications with mmWave massive array antenna.}
	\label{fig:system}
\end{figure*}

\section{Simulations}\label{sec:simulation}
In this section, various simulation results are provided to verify the effectiveness of the proposed method. The dimension of BS antenna is $M^B=128$, and the antenna spacing is set as the half of the carrier wavelength, namely, $d=\frac{\lambda}{2}$. The channel carrier frequency is set as $f_c=60$ GHz, and the bandwidth is set as $W = 600$ MHz. There are $K=4$ UAVs with single antenna that are uniformly distributed in the cell. The performance criteria are set as the normalized channel gain, Doppler shift, DOA, and uplink\ downlink channel, which is given by
\begin{align}\label{equ:SE}
&\textup{MSE}_{\mathbf h_k(l)}= \frac{1}{LK}\sum_{k=1}^K\sum_{l=1}^L\frac{\left\|\mathbf h_k(l)-\hat{\mathbf h}_k(l)\right\|^2}{\left\|\mathbf h_k(l)\right\|^2},\\
&\textup{MSE}_{\alpha_{k}}= \frac{1}{LK}\sum_{k=1}^K\sum_{l=1}^L\frac{\left\|\alpha_{k}-\hat\alpha_{k}\right\|^2}{\left\|\alpha_{k}\right\|^2},\\
&\textup{MSE}_{f_{kd}}= \frac{1}{LK}\sum_{k=1}^K\sum_{l=1}^L\frac{\left\|f_{kd}-\hat f_{kd}\right\|^2}{\left\|f_{kd}\right\|^2},\\
&\textup{MSE}_{\theta_{k}}= \frac{1}{LK}\sum_{k=1}^K\sum_{l=1}^L\frac{\left\|\theta_{k}-\hat\theta_{k}\right\|^2}{\left\|\theta_{k}\right\|^2}.
\end{align}


\begin{figure}[t]
	\centering
	\includegraphics[width=95mm]{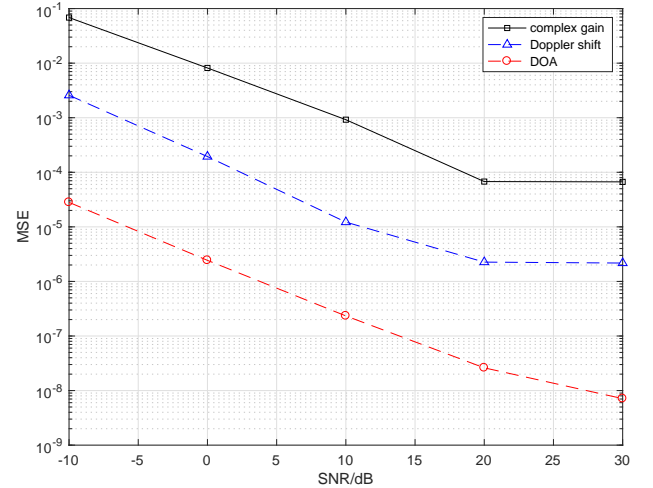}
	\caption{The performance of parameter estimation over SNR.}
	\label{fig:1}
\end{figure}

We first investigate the performance of the proposed GCS based parameter estimation method in Fig.~\ref{fig:1}. The performance metric is the normalized mean square error~(MSE). It can be found that with the increase of the signal to noise ratio~(SNR), MSEs of the complex gain, the Doppler shift, and DOA all decrease. Besides, we see that there are error floors of the estimated parameter due to the limited iteration step in Alg.~\ref{alg:algorithm1}. 

Next, we investigate the  performance of the proposed channel tracking method over SNR in Fig.~\ref{fig:doasnr}, where the antenna number is set as $M=16$, $32$, and $128$ respectively. It can be seen that the MSEs of the proposed GCS based channel tracking methods decreases with the increase of SNR. Besides, with the increase of the antennas, the performance of the proposed method is enhanced, since much more antennas would bring much more spatial gain. Moreover, the conventional channel tracking method that ignores the beam squint effect would not work for UAV communications with large antenna array, which verifies the effectiveness of the proposed method.

\begin{figure}[t]
	\centering
	\includegraphics[width=95mm]{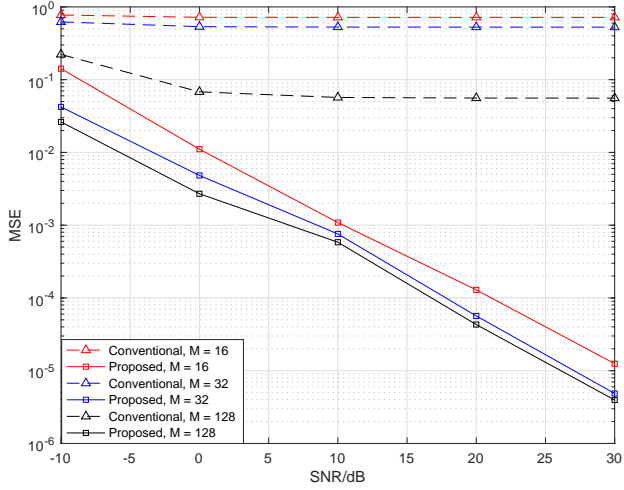}
	\caption{The performance of the proposed channel tracking method over SNR with different number of antennas.}
	\label{fig:doasnr}
\end{figure}

\begin{figure}[t]
	\centering
	\includegraphics[width=95mm]{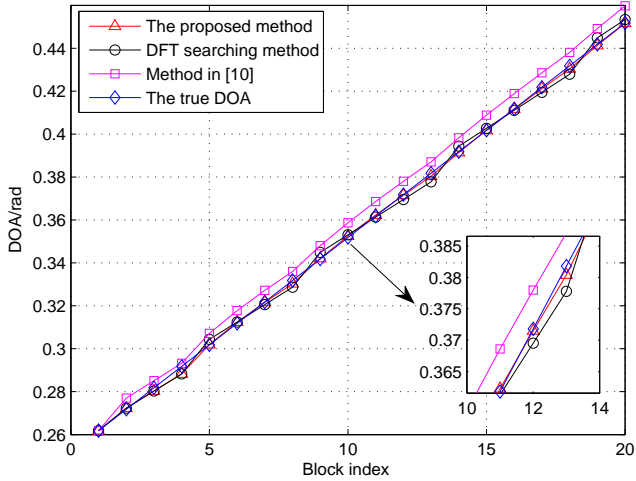}
	\caption{The performance of the DOA tracking over SNR with different number of antennas.}
	\label{fig:doatime}
\end{figure}


Then, we investigate the DOA tracking performance over the time in Fig.~\ref{fig:doatime}, where the DFT searching method, the method in~[10], and the true DOA are also displayed for comparison. It can be seen that the tendencies of all the displayed methods are consisted with the true DOA since the DFT of the channel could reflect the DOA distribution of the scheduled users. Besides, both the DFT searching method and the proposed method are superior to the method in [10], which neglects the beam squint effect. The performance of the proposed method is much better than the DFT searching method. The reason is that the proposed method and the DFT searching method all take the antenna selective effect into consideration, and the performance of the proposed method is further enhanced by the Kalman filter.

\section{Conclusions}\label{sec:conclusion}
In this paper, we investigated UAV communications with mmWave massive array antenna.  First, we explored the UAV channel under both Doppler shift and beam squint effect. Then, we proposed an efficient channel tracking method for mmWave UAV communication systems, where the channel could be derived by estimating DOA, Doppler shift, and complex gain information of the incident signal, respectively. The gridless compressed sensing method was exploited to track the channel parameters of UAV communications with massive antenna array. Finally, we provided various simulation results to verify the effectiveness of the proposed method over the existing ones.

\end{document}